\documentclass{wscpaperproc}
\usepackage{latexsym}
\usepackage{graphicx}
\usepackage{mathptmx}
\usepackage{enumitem}
\usepackage[mathscr]{euscript}
\usepackage[ruled,noline]{algorithm2e}
\usepackage{subcaption}
%
\usepackage{amsmath}
\usepackage{amsfonts}
\usepackage{amssymb}
\usepackage{amsbsy}
\usepackage{amsthm}
\usepackage{bm}
\usepackage{xspace}
\usepackage{framed}

%


\usepackage[pdftex,colorlinks=true,urlcolor=black,citecolor=black,anchorcolor=black,linkcolor=black]{hyperref}
\hypersetup{nolinks=false}

\newtheorem{assumption}{Assumption}

\newtheorem{example}{Example}
\newtheorem{lemma}{Lemma}

\hypersetup{colorlinks = true, urlcolor = blue, linkcolor = black,
  citecolor = black }
\newcommand{\xmath}[1]{\ensuremath{#1}\xspace}

\newcommand{\R}{\xmath{\mathbb{R}}}
\newcommand{\mv}[1]{\bm{#1}}
\newcommand{\ttheta}{\mv{\theta}}

\newcommand{\rr}{\mv{r}}

\newcommand{\xx}{\mv{x}}
\newcommand{\XX}{\mv{X}}

\newcommand{\ZZ}{\mv{Z}}
\newcommand{\yy}{\mv{y}}
\newcommand{\zz}{\mv{z}}
\newcommand{\setindicator}{\mv{1}}

\newcommand{\Real}{\mathbb{R}}

\newcommand{\fis}{\hat{f}_{{\tt is},n}}



%

\newtheoremstyle{wsc}
{3pt}
{3pt}
{}
{}
{\bf}
{}
{.5em}
{}

\theoremstyle{wsc}
\newtheorem{theorem}{Theorem}

\newtheorem{definition}{Definition}

\newtheorem{formulation}[theorem]{Formulation}

\begin{document}

%
%

\pagestyle{fancyplain}

\thispagestyle{plain}
\firstPageHead{}

\chead{\fancyplain{}{\itshape Deo and Murthy}}

\rhead{}
\cfoot{}
\renewcommand{\headrulewidth}{0pt} 

\makeatletter
\let\@internalcite\cite
\def\cite{\def\@citeseppen{-1000}%
    \def\@cite##1##2{(##1\if@tempswa , ##2\fi)}%
    \def\citeauthoryear##1##2##3{##1 ##3}\@internalcite}
\def\citeNP{\def\@citeseppen{-1000}%
    \def\@cite##1##2{##1\if@tempswa , ##2\fi}%
    \def\citeauthoryear##1##2##3{##1 ##3}\@internalcite}
\def\citeN{\def\@citeseppen{-1000}%
    \def\@cite##1##2{##1\if@tempswa, ##2)\else{}\fi}%
    \def\citeauthoryear##1##2##3{##1 (##3)}\@citedata}
\def\citeA{\def\@citeseppen{-1000}%
    \def\@cite##1##2{(##1\if@tempswa , ##2\fi)}%
    \def\citeauthoryear##1##2##3{##1}\@internalcite}
\def\citeANP{\def\@citeseppen{-1000}%
    \def\@cite##1##2{##1\if@tempswa , ##2\fi}%
    \def\citeauthoryear##1##2##3{##1}\@internalcite}
\def\shortcite{\def\@citeseppen{-1000}%
    \def\@cite##1##2{(##1\if@tempswa , ##2\fi)}%
    \def\citeauthoryear##1##2##3{##2 ##3}\@internalcite}
\def\shortciteNP{\def\@citeseppen{-1000}%
    \def\@cite##1##2{##1\if@tempswa , ##2\fi}%
    \def\citeauthoryear##1##2##3{##2 ##3}\@internalcite}
\def\shortciteN{\def\@citeseppen{-1000}%
    \def\@cite##1##2{##1\if@tempswa, ##2\else{}\fi}%
    \def\citeauthoryear##1##2##3{##2 (##3)}\@citedata}
\def\shortciteA{\def\@citeseppen{-1000}%
    \def\@cite##1##2{(##1\if@tempswa , ##2\fi)}%
    \def\citeauthoryear##1##2##3{##2}\@internalcite}
\def\shortciteANP{\def\@citeseppen{-1000}%
    \def\@cite##1##2{##1\if@tempswa , ##2\fi}%
    \def\citeauthoryear##1##2##3{##2}\@internalcite}
\def\citeyear{\def\@citeseppen{-1000}%
    \def\@cite##1##2{(##1\if@tempswa , ##2\fi)}%
    \def\citeauthoryear##1##2##3{##3}\@citedata}
\def\citeyearNP{\def\@citeseppen{-1000}%
    \def\@cite##1##2{##1\if@tempswa , ##2\fi}%
    \def\citeauthoryear##1##2##3{##3}\@citedata}
%
%
%
\def\@citedata{%
    \@ifnextchar [{\@tempswatrue\@citedatax}%
                  {\@tempswafalse\@citedatax[]}%
}

\def\@citedatax[#1]#2{%
\if@filesw\immediate\write\@auxout{\string\citation{#2}}\fi%
  \def\@citea{}\@cite{\@for\@citeb:=#2\do%
    {\@citea\def\@citea{, }\@ifundefined
       {b@\@citeb}{{\bf ?}%
       \@warning{Citation `\@citeb' on page \thepage \space undefined}}%
{\csname b@\@citeb\endcsname}}}{#1}}%

%
\def\@citex[#1]#2{%
\if@filesw\immediate\write\@auxout{\string\citation{#2}}\fi%
  \def\@citea{}\@cite{\@for\@citeb:=#2\do%
    {\@citea\def\@citea{; }\@ifundefined
       {b@\@citeb}{{\bf ?}%
       \@warning{Citation `\@citeb' on page \thepage \space undefined}}%
{\csname b@\@citeb\endcsname}}}{#1}}%

%
\def\@biblabel#1{}
\makeatother



\newdimen\bibindent
\bibindent=0.0em
\def\thebibliography#1{\section*{\refname}\list
   {}{\settowidth\labelwidth{[#1]}
   \leftmargin\parindent
   \itemindent -\parindent
   \listparindent \itemindent
   \itemsep 0pt
   \parsep 0pt}
   \def\newblock{}
   \sloppy
   \sfcode`\.=1000\relax}


\setlength{\baselineskip}{12.7pt}

\title{Importance Sampling for Minimization of Tail Risks: A Tutorial}

\author{Anand Deo\\[12pt]
	Indian Institute of Management Bangalore\\
	Bangerghatta Road, Bilekahalli\\
	Bangalore, 560076, INDIA\\
\and
Karthyek Murthy\\[12pt]
Singapore University of Technology and Design\\
8 Somapah Road\\
Singapore 487372, SINGAPORE
}

\maketitle
\section*{ABSTRACT}
This paper provides an introductory overview of how one may employ importance sampling effectively as a tool for solving stochastic optimization formulations incorporating tail risk measures such as Conditional Value-at-Risk. Approximating the tail risk measure by its sample average approximation, while appealing due to its simplicity and universality in use, requires a large number of samples to be able to arrive at risk-minimizing decisions with high confidence. This is primarily due to the rarity with which the relevant tail events get observed in the samples. In simulation, Importance Sampling  is among the most prominent methods for substantially reducing the sample requirement while estimating probabilities of  rare events. Can importance sampling be used for optimization as well? If so, what are the ingredients required for making importance sampling an effective tool for optimization formulations involving rare events? This tutorial aims to provide an introductory overview of the two key ingredients in this regard, namely, (i) how one may arrive at an importance sampling change of measure prescription at every decision, and (ii) the prominent techniques available for integrating such a prescription within a solution paradigm for stochastic optimization formulations.  

\section{INTRODUCTION}
When building optimization models for planning and  decision-making under uncertainty, a risk-neutral approach would seek to identify the best decision on average. Relying on independent yet similar repeated chances, a risk-neutral approach is justified by the law of large numbers. An unfortunate run entailing unacceptable losses in the first few runs, for example, may however make it harder to sustain and continue the operations even if an eventual upturn is inevitable. Therefore, the practice of risk management strives to look beyond expected outcomes and proactively shape the loss distribution, particularly aspects such as variability, risks posed by extreme losses, etc. 

Mean-risk optimization models, which seeks to minimize a measure of risk while meeting a target mean return, is a prominent approach by which a modeler may introduce risk-aversion while  optimizing under uncertainty. When performing optimization, a convex risk measure like conditional-value at risk (CVaR) becomes  particularly appealing due to its ability to quantitatively capture distribution tail  risks while retaining the convexity of the objective. Roughly speaking, CVaR at a quantile level $1-\beta$ captures the loss due to top $\beta$-fraction of the samples. Since the introduction of CVaR for optimization under uncertainty in \citeN{rockafellar2000optimization} and \citeN{uryasev}, mean-CVaR optimization modeling has become one of the most common vehicle for managing risk in numerous operations research applications, and as well in a number of related engineering disciplines. 

As with most stochastic optimization formulations, solving a mean-CVaR optimization model is typically tackled by approximating the mean and CVaR by their respective Monte Carlo sample average approximations (SAA).  
Despite the bottleneck that only a small fraction of the samples contribute to the evaluation of the CVaR criterion, SAA remains the most preferred solution approach due to its simplicity and near-universality in use: The SAA procedure and methods for inferring its solution quality remain unchanged as long as the objective and constraints possess finite second moments (see e.g., \shortciteNP{shapiro1991asymptotic,homem2014monte}). 
For a risk-averse optimization formulation involving CVaR at quantile level $1-\beta,$ the sample requirement for SAA to work gets blown up by a large multiplicative factor of $O(1/\beta),$ as $\beta \rightarrow 0,$ when compared to the risk-neutral counterparts. This unfortunately  leads to extraordinarily large formulations if one is employing a deterministic solver, and slower convergence if one is employing stochastic gradients. 

If we view the literature on tackling rare tail events in simulation, we witness Importance Sampling (IS) as one of the most prominent variance reduction approaches used for substantially reducing the sample requirement involved in estimating rare event probabilities and related expectations. Can importance sampling be used, to a similar degree of effectiveness, in optimization as well? This tutorial is dedicated to concisely introducing the ingredients required for using importance sampling for optimization under rare events. As the literature on simulation of rare events is rich and several beautiful expository reviews have been written on importance sampling for rare events, this tutorial will restrict its discussion to the scope of using IS for optimization, specifically for mean-CVaR formulations. The focus on CVaR is for the purposes of clarity, and the discussed methods extend even if CVaR is replaced by a different tail risk measure (such as expected excess loss) which preserves convexity.

The basic idea behind IS is to accelerate the occurrences of the tail risk events by sampling from alternate distributions which place greater emphasis on the risk scenarios of interest. Observed samples are then suitably reweighed to eliminate the bias introduced. This tutorial will specifically focus on the following two ingredients: (1) how one may arrive at an effective change of measure for evaluating at the objective at any given decision; and (2) how one may incorporate these decision-dependent changes of measure in a solution paradigm for minimizing CVaR.

The rest of the tutorial is organized as follows: Section 2 provides a definition of CVaR and introduces the risk-averse optimization formulation we shall be primarily considering in this paper. Section 3 provides an introduction to  importance sampling and discusses how one may arrive at an effective change of measure for a given decision with illustrative examples. Section 4 introduces a retrospective approximation approach, which can be understood as performing SAA with importance samples, while incorporating lazy-updates for changing the importance sampling distribution. Section 5 presents an adaptive stochastic approximation procedure which could be suitable if iterative gradient-descent methods are preferred. 
The methods are accompanied by results on the magnitude of sample reductions offered by efficient IS, relative to SAA, in  obtaining an optimal solution of desired quality.


\section{Risk-averse Optimization driven by Conditional Value at Risk}
In this section, we define the notion of conditional value at risk (CVaR), describe two prominent CVaR-driven optimization formulations, and briefly note  the merits and challenges in using SAA to solve them.  
\subsection{Conditional Value at Risk and its variational representation}
Let $\XX$ be a random vector modeling the collection of  uncertain variables affecting an optimization problem. Suppose that  $\ell(\xx,\ttheta)$ denotes the loss
incurred for a choice of decision $\ttheta$ when the random vector $\XX$ realises the value $\xx$. The Value at Risk (VaR) of the loss $\ell(\XX,\ttheta)$ at a tail level $\beta \in (0,1)$ is simply the loss quantile 
 \[v_\beta(\ttheta) = \inf\{u: P(\ell(\XX,\ttheta) \geq u) \leq \beta\}.\]
The Conditional Value at Risk (CVaR) of the loss $\ell(\XX,\ttheta)$ at a tail level $\beta$ is the average  loss given that $L(\XX,\theta)$ exceeds the respective value at risk: specifically,
\begin{equation}\label{eqn:CVaR_defintion}
    C_\beta(\ttheta) = E\left[\ell(\XX,\ttheta) \mid \ell(\XX,\ttheta) \geq v_\beta(\ttheta)\right].
\end{equation}
In this paper, we shall be considering the challenges in estimating and optimizing CVaR when the tail-level $\beta$ is close to zero. The following variational representation, due to \citeNP{rockafellar2000optimization}, makes CVaR conducive for optimization:
\begin{align}
    C_\beta(\ttheta) = \inf_{u\in \R} \left\{u+ \beta^{-1} E\left(\ell(\XX,\ttheta)-u\right)^+  \right\}.
    \label{eqn:cvar-var-rep}
\end{align}
Here $\left(\ell(\XX,\ttheta)-u\right)^+$ denotes the positive part $\max\{\ell(\XX,\ttheta)-u,0\},$ which captures the extent of excess loss above a level $u.$ The value at risk $v_\beta(\ttheta)$ is an optimal solution in the variational representation \eqref{eqn:cvar-var-rep} (see \cite{rockafellar2000optimization}), and it is readily verifiable that substituting the choice $u = v_\beta(\ttheta)$  in the objective in \eqref{eqn:cvar-var-rep} yields the right-hand side in \eqref{eqn:CVaR_defintion}.  

\subsection{Stochastic optimization formulations incorporating CVaR}
\begin{formulation}[Minimizing CVaR, potentially with constraints on the mean]
    Equipped with the above variational representation, if one wishes to minimize CVaR of a loss $\ell(\XX,\ttheta)$ over decision alternatives $\ttheta$ in the set $\Theta \subseteq \R^p,$  they may do so by solving the right-hand side of \eqref{eqn:CVaR_OpT} below.
\begin{align}\label{eqn:CVaR_OpT}
   c_\beta := \inf_{\ttheta \in \Theta} C_\beta(\ttheta) =\inf_{u\in\R,\,\ttheta\in \Theta} f(u,\ttheta),
\end{align}
where $f(u,\ttheta) = E[F(\XX \,;\, u,\ttheta)]$ and
\begin{align}
     F(\xx\,;\, u,\ttheta) := u+ \beta^{-1} \left(\ell(\XX,\ttheta)-u\right)^+  
     \label{eqn:F}
\end{align}
Observe that if the loss $\ell(\xx,\ttheta)$ is a convex function of $\ttheta,$ for any fixed $\xx,$ then the convexity is retained in \eqref{eqn:CVaR_OpT}. The set $\Theta$ can be modeled to include constraints on the decisions one may wish to impose, as illustrated in Example \ref{eg:lin-portfolio} below. 

\begin{example}[Portfolio optimization]
\label{eg:lin-portfolio} 
\textnormal{The task of constructing a linear portfolio with minimum risk while meeting a target return is among the simplest yet instructive examples one may consider. Suppose $\XX$ is an $\R^d$-valued random vector modeling the returns of $d$-assets. For a linear portfolio model  which places a weight $\theta_i$ over the asset $i,$ for $i = 1,\ldots,n,$ the return realization gets specified by $\ttheta^\intercal \XX.$ In this case, we can take the portfolio loss to be $\ell(\xx,\ttheta) = -\ttheta^\intercal \xx.$ It is convenient to require the weights placed over the $d$ assets to add up to 1. Therefore, when an investor seeks to meet a target return $t \in (0,+\infty),$ the constraint set $\Theta$ can be specified as in 
\begin{equation}
\label{eq:constraint}
    \Theta = \{ \ttheta \in \R^d_+: \boldsymbol{1}^\intercal \ttheta = 1, \ \boldsymbol{\mu}^\intercal \ttheta \geq t\},
\end{equation} 
where the vector $\boldsymbol{\mu}$ is the mean vector of the $d$ assets. }
\end{example}
\end{formulation}

\begin{formulation}[Mean-CVaR optimization]
Another convenient model for introducing risk aversion is to consider a convex combination of the CVaR criterion and the risk-neutral expected value objective as below: 
\begin{align}
    \inf_{\ttheta \in \Theta} \left\{ \lambda E[\ell(\XX,\ttheta)] + (1-\lambda)  C_\beta(\ttheta)\right\}, 
    \label{eqn:mean-cvar-opt}
\end{align}
where $\Theta$ is a convex subset of the euclidean space and $\lambda$ is a parameter governing the risk-appetite of a decision-maker. One may use a smaller value of $\lambda$ to specify a smaller appetite for risk.  From the variational representation in \eqref{eqn:cvar-var-rep}, the above mean-CVaR model simplifies to 
\begin{align*}
    \inf_{u \in R, \ttheta \in \Theta}  \left\{ \lambda E\left[ \ell(\XX,\ttheta) \right] +  (1-\lambda) \, E \left[F(\XX\,; \,  u,\ttheta)\right] \right\}, 
\end{align*}
where $F(\cdot)$ is defined, as before, in \eqref{eqn:F}. 
\end{formulation}

\begin{example}[Risk-averse two stage linear programs] 
\label{eg:two-stage-programs}
\textnormal{In two-stage  formulations, a  decision-maker takes an action in the first-stage; and in the wake of the realization of the random vector $\XX,$ he/she additionally gets to make a recourse decision augmenting the first-stage decision. Usually, a recourse decision is interpreted as utilizing the extra information to  compensate for any bad effects that might have been experienced as a result of first-stage action. A risk-averse two-stage linear program can be formulated as in 
 \eqref{eqn:mean-cvar-opt},  with 
\begin{align*}
    \ell(\xx,\ttheta) = c^\intercal \ttheta + Q(\xx,\ttheta),
\end{align*}
where $Q(\xx,\ttheta)$ is the optimal value of a second-stage linear program. The following is an example of a second-stage formulation:  
\begin{equation*}
    Q(\xx,\ttheta) =  \inf_{\substack{T\ttheta + W\yy= \mv{h}\\ \yy\geq \mv{0}}} \mv{y}^\intercal \xx
\end{equation*}
where $T$ and $W$ are suitably dimensioned matrices referred to as ``tender'' and ``recourse'' matrices. In this example, we have taken these matrices to deterministic. Allowing them to be random provides additional modeling power. One may refer to \shortciteNP[Chapters 1-2]{shapiro2021lectures}  for a comprehensive introduction to two-stage stochastic programming formulations and applications. Equipped with the flexibility to capture the effects of both the ``here-and-now'' decision $\ttheta$ and the recourse decision $\yy,$ two stage stochastic programming models are among the most widely  applied formulations for planning and decision-making under uncertainty.
}
\end{example}

\subsection{Sample-Average Approximation (SAA)}
For solving \eqref{eqn:CVaR_OpT}, one may consider its sample average approximation in which the expectations  $E(\ell(\XX,\ttheta) - u)^+$ in the objective, for all   $\ttheta \in \Theta,$ are replaced by the respective average over independent observations of $\XX.$ In particular, given $n$ i.i.d. samples of data $\XX_1,\ldots, \XX_n$ from the distribution of $\XX$, the sample averaged objective is denoted by,
\begin{equation}
\hat{f}_{n}(u,\ttheta) \ = \  u+\frac{1}{n\beta} \sum_{i=1}^n (\ell(\XX_i,\ttheta) - u)^{+}. 
\label{eq:saa-obj}
\end{equation}
Then a sample average approximation (SAA) to the optimization problem~\eqref{eqn:CVaR_OpT} may be specified as,
\begin{equation}\label{eqn:PMC}
\hat{c}_{n} = \inf_{u \in \R,\ttheta \in \Theta}  \hat{f}_n(u,\ttheta).
\end{equation}
Likewise, a sample-average approximation to the mean-CVaR formulation \eqref{eqn:mean-cvar-opt} is given by, 
\begin{align}
    \inf_{u \in \R, \ttheta \in \Theta} \left\{  \frac{\lambda}{n} \sum_{i=1}^n \ell(\XX_i,\ttheta) + (1-\lambda) \hat{f}_n(u,\ttheta) \right\}.
    \label{eqn:mean-cvar-saa}
\end{align}

Approximating the true expectations in the objective by their sample averages is the simplest approximation one can perform, and it has a broad appeal due to its simplicity and near-universal applicability. The (sample requirements for the) resulting formulations could be extraordinarily large however when handling tail expectations: Observe that the term $(\ell(\XX_i,\ttheta) - u)^{+}$ appearing in \eqref{eq:saa-obj} is often zero for most observations $\XX_i$, when a search is conducted over the variable $u$ to find its optimal value $v_\beta(\ttheta).$ This is due to the value-at-risk $v_\beta(\ttheta)$ being the $(1-\beta)$-th quantile of the loss distribution. Thus, a large fraction of terms in the summation in \eqref{eq:saa-obj} will be zero, which is in line with CVaR being a tail risk measure.  

Large sample properties such as consistency and  asymptotic normality are well-known for SAA estimators (see \citeNP[Theorem 3.2]{shapiro1991asymptotic}). An application of these properties to identify the number of samples $n$ required in \eqref{eqn:PMC} to approximate \eqref{eqn:CVaR_OpT} reveals that the number of samples required scales inversely proportional to the tail level $\beta$ of interest. This observation is consistent with the understanding that one would need approximately $\tilde{O}(\beta^{-1})$ samples, as $\beta \searrow 0,$ in order to witness loss scenarios exceeding the $(1-\beta)$-th quantile captured by the value at risk. 

To gather a sense of the magnitude of the number of samples required, consider the following portfolio optimization example from  \citeN{caccioli2018portfolio}: Even at a tail level of $\beta = 1/40,$ it has been observed that one would need about 14 years of observations to  achieve a 10\% relative error in the optimum portfolio's CVaR for 100 stocks. A more detailed discussion on this example is available in \citeN{caccioli2018portfolio}. The perils of minimizing CVaR with insufficient samples are also discussed with the help of a detailed empirical study in  \shortciteNP{lim2011conditional}.  The computational effort required for solving \eqref{eqn:PMC} becomes exorbitantly large as a consequence of the large sample requirement, specifically if the tail level $\beta$ is  small. Small values of $\beta$ may be particularly pertinent if a situation demands high reliability, as may be required in settings such as electric power dispatch and the design of cyber-physical systems.   

To keep the discussion centred on overcoming the challenges due to this rarity in CVaR minimization, we focus on the formulation \eqref{eqn:CVaR_OpT} in the rest of this tutorial and assume that the constraints in the decision set $\Theta$ are  specified in terms of explicitly known quantities: For example, if the constraint set is as in \eqref{eq:constraint}, we assume that the mean vector $\boldsymbol{\mu}$ is known, even though in practice, the mean has to be estimated by the empirical mean or its variants. The importance sampling techniques we develop in the subsequent sections are for approximating $E[F(\XX\,; \,u, \ttheta)],$ and they can be used for both the CVaR minimization \eqref{eqn:CVaR_OpT}, mean-CVaR \eqref{eqn:mean-cvar-opt} formulations, and as well their variants. In particular, we shall develop importance sampling based estimator $\fis(u,\ttheta)$ for  $E[F(\XX\,; \,u, \ttheta)],$  which can be used as a replacement for the sample average approximation $\hat{f}_n(u,\ttheta)$ in the same way in both the SAA objectives \eqref{eq:saa-obj} and \eqref{eqn:mean-cvar-saa}.

\section{Variance reduction with importance sampling}
\label{sec:IS}
This section introduces the basic idea behind importance sampling (IS) and outlines techniques one may employ to derive IS distributions for any fixed decision choice $\ttheta \in \Theta$.  
\subsection{Importance Sampling for tail estimation of tail risks}\label{sec:IS_Background}
Consider any fixed choice of $\ttheta \in \Theta.$ Recall that the main difficulty in the estimation of $E[(\ell(\XX,\ttheta)-u)^+]$ in \eqref{eqn:CVaR_OpT} is the lack of samples in the excess loss region $\{\ell(\xx,\ttheta)\geq u\},$ for values of $u$ around the $(1-\beta)$-th quantile of the loss. IS attempts to overcome this drawback by instead drawing samples from an alternative distribution under which this tail event occurs more frequently. The  bias incurred by sampling from a different distribution is compensated by suitably weighing the resulting observations.  
Specifically, let $\ZZ$ be another random vector of our choice whose probability density $f_{\ZZ}(\zz) >0$ is  absolutely continuous with respect to that of $\XX$: that is $f_{\ZZ}(\zz) >0$ whenever $f_{\XX}(\zz) >0$. Now, consider the following weighted estimator for the objective, 
\begin{equation}\label{eqn:pis}
     \fis(u,\ttheta) = u+ \frac{1}{n\beta} \sum_{i=1}^n  (\ell(\ZZ_i,\ttheta)-u)^+\frac{f_{\XX}(\ZZ_i)}{f_{\ZZ}(\ZZ_i)}
\end{equation}
where $\ZZ_{1},\ldots \ZZ_{n}$ are sampled i.i.d. from the distribution of $\ZZ$. The ``weights" in the above estimators are likelihood ratios ${f_{\XX}(\ZZ_i)}/{f_{\ZZ}(\ZZ_i)}$ accompanying each observation of the excess loss $(\ell(\ZZ_i,\ttheta) - u)^+.$ Had there been no change in the density from which the samples are obtained (that is, if the samples are obtained from the original density $f_{\XX}(\cdot)$ itself), then observe that we get back the SAA objective in \eqref{eq:saa-obj}. 

Suppose that $\fis$ has finite variance, and let any alternative  distribution choice  $f_{\ZZ}$ which possess this property be labeled as ``admissible". Then observe that as the number of samples increase, the IS objective $\fis(u,\ttheta)$ approximates the desired objective $f(u,\ttheta) = u + \beta^{-1}E[(\ell(\XX,\ttheta)-u)^+]$ due to the following: 
\begin{align*}
    \fis(u,\ttheta) &\to u +  \beta^{-1} E\left[(\ell(\ZZ,\ttheta)-u)^+\frac{f_{\XX}(\ZZ)}{f_{\ZZ}(\ZZ)}\right]  = u + \beta^{-1}\int_{\zz} (\ell(\zz,\ttheta)-u)^+ \frac{f_{\XX}(\zz)}{f_{\ZZ}(\zz)} f_{\ZZ}(\zz)d\zz \\
    &= u + \beta^{-1} \int_{\zz} (\ell(\zz,\ttheta)-u)^+ f_{\XX}(\zz) d\zz
    = u + \beta^{-1}E[(\ell(\XX,\ttheta)-u)^+] = f(u,\ttheta). 
\end{align*}
as $n\to\infty$, and therefore the estimator in \eqref{eqn:pis} is consistent.  In the above chain, the first equality holds due to law of large numbers. The above equations also show that there is no bias introduced in this approximation: That is, $E[\fis(u,\ttheta)] = f(u,\ttheta),$ for every choice of $u$ and $\ttheta.$ 

While every admissible  change of distribution approximates the target objective as above, the key to approximating well with a substantially smaller number of samples relies on  making a good choice for the IS distribution $\ZZ$. Indeed in the estimation of rare event probabilities, there is a considerable litreature on  how to arrive at good choices for IS distributions: See, for example, \shortciteNP{Heidelberger,AGbook,juneja2007asymptotics}, or, more recently \shortciteNP{blanchet2019rare,bai2022rare,deo2021efficient} for treatments on objectives which may have a greater relevance from an optimization point of view. 

Keeping typical objectives arising in  optimization in view, we next describe two  approaches which could be considered for arriving at an effective  IS distribution for any fixed decision $\ttheta \in \Theta.$ 

\subsection{Approach 1: IS via exponential twisting based on dominating points}\label{sec:IS_Twist}
The first approach for  deriving a good importance sampling change of measure is to explicitly use the distribution of $\XX$ and the loss $\ell(\XX,\ttheta)$ to carefully arrive at an IS distribution choice. 
Executing this approach typically involves two steps. 
\begin{itemize}[leftmargin = *]
    \item[] \textbf{Step 1:}  Use the log-moment generating function of $\XX$ to identify the so-called ``dominating points" of the excess loss set  $\{\xx: \ell(\xx,\ttheta) \geq u \},$ with reference to the given distribution for $\XX.$ Roughly speaking, the dominating points are a collection of  points in the excess loss set $\{ \xx: \ell(\xx,\ttheta) \geq u \}$ such that each dominating point captures, in a local sub-region of $\{ \xx: \ell(\xx,\ttheta) \geq u \}$, the most likely way the excess loss event happens. 
\item[] \textbf{Step 2:} Once the dominating points are identified, one typically chooses the IS distribution to be a mixture distribution, with each component distribution of the mixture being chosen to be an ``exponentially tilted'' distribution whose mean coincides with one of the dominating points. Thus one may need as many component distributions as the number of dominating points. 
\end{itemize}
By placing the emphasis on the dominating points, the guiding principle here is to ensure that locally within sub-regions in $\{\xx: \ell(\xx,\ttheta) \geq u \},$  the points which are more likely to be observed are indeed given more probability mass. 
See, for e.g.., \shortciteNP[Definition 2]{arief2021certifiable} for a  definition of dominating points. 
\begin{definition}[Exponentially tilted densities]
    Given a probability density $f_{\XX}$ for the random vector $\XX,$ we call a new density $g$ to be exponentially titled version of $f_{\XX}$ with  a tilt factor $\mv{b}$ if
    \begin{align*}
        g(\xx) \propto \exp\big( \mv{b}^\intercal \xx\big) f_{\XX}(\xx),
    \end{align*}
    and $E[\exp(\mv{b}^\intercal \XX)]$ is finite. In particular, we have 
    \begin{align}
        g(\xx) = \frac{\exp\left(  \mv{b}^\intercal \xx \right)}{E[\exp(\mv{b}^\intercal \XX)]} f_{\XX}(\xx) = \exp\left(  \mv{b}^\intercal \xx - \Lambda(\mv{b})\right)f_{\XX}(\xx),
        \label{eqn:IS-tilt-density}
    \end{align}
    where $\Lambda(\mv{r})$ is the log-moment generating function defined by $\Lambda(\rr) = \log E[\exp(\rr^\intercal \XX)],$ for $\rr \in \R^d.$
\end{definition}
\begin{lemma}
Suppose that the log-moment generating function $\Lambda(\rr) = E[\exp(\rr^\intercal \XX)]$ is finite and differentiable at $\rr = \mv{b}.$ If $\ZZ$ is distributed according to the exponentially tilted density $g(\cdot)$ in \eqref{eqn:IS-tilt-density}, then $E[\ZZ] = \nabla \Lambda(\mv{b}).$
\label{lem:exp-tilt-mean}
\end{lemma}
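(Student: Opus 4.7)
The plan is to evaluate $E[\ZZ]$ directly using the density $g$ from \eqref{eqn:IS-tilt-density} and recognize the resulting integral as a scaled gradient of the moment generating function, which in turn equals $\nabla\Lambda(\mv{b})$ by the chain rule applied to $\Lambda(\rr)=\log E[\exp(\rr^\intercal\XX)]$.

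Concretely, I would first write
\begin{align*}
    E[\ZZ] = \int \zz\, g(\zz)\, d\zz = \exp(-\Lambda(\mv{b}))\int \zz\, \exp(\mv{b}^\intercal\zz)\, f_{\XX}(\zz)\,d\zz
    = \frac{E[\XX \exp(\mv{b}^\intercal\XX)]}{E[\exp(\mv{b}^\intercal\XX)]}.
\end{align*}
Separately, differentiating the definition $\Lambda(\rr)=\log E[\exp(\rr^\intercal\XX)]$ at $\rr=\mv{b}$ yields, by the chain rule,
\begin{align*}
    \nabla\Lambda(\mv{b}) = \frac{\nabla_{\rr} E[\exp(\rr^\intercal\XX)]\big|_{\rr=\mv{b}}}{E[\exp(\mv{b}^\intercal\XX)]}.
\end{align*}
The key step is then to exchange the gradient with the expectation so that $\nabla_{\rr}E[\exp(\rr^\intercal\XX)]|_{\rr=\mv{b}} = E[\XX\exp(\mv{b}^\intercal\XX)]$; combining with the display above gives the claim.

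The main obstacle, and essentially the only nontrivial ingredient, is justifying this interchange of differentiation and integration at the point $\mv{b}$. I would argue this via dominated convergence applied coordinatewise: for each coordinate $j$, the difference quotient is $(\exp((\mv{b}+h\mv{e}_j)^\intercal\XX) - \exp(\mv{b}^\intercal\XX))/h$, and the hypothesis that $\Lambda$ is finite and differentiable at $\mv{b}$ implies finiteness of $\Lambda$ on a small neighbourhood of $\mv{b}$ (since a convex function differentiable at an interior point of its effective domain is finite nearby). Hence $E[\exp((\mv{b}\pm\varepsilon\mv{e}_j)^\intercal\XX)]<\infty$ for some $\varepsilon>0$, and by the mean value theorem the difference quotients are bounded in absolute value by $|X_j|\exp(\mv{b}^\intercal\XX)\exp(\varepsilon|X_j|)\le C[\exp((\mv{b}+\varepsilon\mv{e}_j)^\intercal\XX)+\exp((\mv{b}-\varepsilon\mv{e}_j)^\intercal\XX)]$, an integrable dominating function. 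This legitimises passing the derivative inside the expectation and closes the argument.

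If one wishes to avoid invoking local finiteness of $\Lambda$ separately, an alternative route is to note that moment generating functions are analytic in the interior of their domain of finiteness, which immediately grants termwise differentiation of the power series representation of $E[\exp(\rr^\intercal\XX)]$ and yields the same identity $\nabla E[\exp(\rr^\intercal\XX)]=E[\XX\exp(\rr^\intercal\XX)]$ at any interior point. Either way, the conclusion $E[\ZZ]=\nabla\Lambda(\mv{b})$ follows immediately.
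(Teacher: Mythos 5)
Your proof is correct and follows essentially the same route as the paper's: compute $E[\ZZ]$ directly from the tilted density to get the ratio $E[\XX\exp(\mv{b}^\intercal\XX)]/E[\exp(\mv{b}^\intercal\XX)]$, and identify this with $\nabla\Lambda(\mv{b})$ by differentiating under the expectation. The only difference is that you explicitly justify the interchange of gradient and expectation via dominated convergence, a step the paper's two-line computation takes for granted.
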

\begin{proof}\let\qed\relax
The conclusion follows directly from the following two deductions:
\begin{align*}
    E[\ZZ] &= \int \zz g(\zz) d\zz = \int \zz \frac{\exp(\mv{b}^\intercal \zz)}{E[\exp(\mv{b}^\intercal \XX)]}f_{\XX}(\zz)d\zz  = \frac{E\left[ \XX \exp(\mv{b}^\intercal \XX)\right]}{E\left[ \exp(\mv{b}^\intercal \XX)\right]}; \text{ and }\\
   \nabla \Lambda (\rr) &= \int \nabla \exp(\rr^\intercal \xx) g(\xx) d\xx = \int \xx  \exp(\rr^\intercal \xx) g(\xx) d\xx = \int \xx   \frac{\exp(\mv{r}^\intercal \xx)}{E[\exp(\mv{r}^\intercal \XX)]}f_{\XX}(\xx) d\xx = \frac{E\left[ \XX \exp(\mv{r}^\intercal \XX)\right]}{E\left[ \exp(\mv{r}^\intercal \XX)\right]}.
\end{align*}
\end{proof}
Equipped with the above definition of exponentially tilted density and its properties, Example~\ref{eg:Dominating_Points} below demonstrates how one may execute the two-step procedure indicated above.\\

\begin{example}\label{eg:Dominating_Points}\em
Consider the piece-wise linear loss, $\ell(\xx,\ttheta) = \max\{\ttheta^\intercal \mv{A}_i\xx:i = 1,\ldots,M\},$ where $\{\mv{A}_i: i = 1,\ldots,M\}$ are $p \times d$ matrices.  Let $\XX$ be a light-tailed with  log-moment generating function,   
$\Lambda(\rr) = \log E[\exp(\rr^\intercal \XX)],$
for $\rr \in \R^d.$ For this example, we take $\Lambda(\cdot)$ to be strictly convex, differentiable, and finite for every $\rr \in \R^d.$ An important quantity useful for identifying a good choice of IS distribution is the convex conjugate of the log-moment generating function $\Lambda(\cdot)$ defined as follows:
    $\Lambda^\ast (\xx) = \sup_{\rr \in \R^d}\left\{\rr^\intercal \xx - \Lambda(\rr)\right\}.$
In the example of $\XX$ being a multivariate normal random vector with mean $\mv{m}$ and covariance $\mv{\Sigma},$ we have $\Lambda(\rr) = \mv{m}^\intercal \mv{r} + \rr^\intercal \mv{\Sigma} \rr/2$ and $\Lambda^\ast(\xx) = (\xx - \mv{m})^\intercal \mv{\Sigma}^{-1}  (\xx - \mv{m})/2.$ 
For $i = 1,\ldots,M,$ let 
\begin{align}
    \mv{a}_i = \arg\min_{\xx} \left\{ \Lambda^\ast (\xx): \ttheta^\intercal \mv{A}_i\xx \geq u\right\}. 
    \label{eqn:FL_Log_MGF_arg-mins}
\end{align}
See that the excess loss set $\{\xx: \ell(\xx,\theta) \geq u \}$ which is of interest to us can be seen as the union of the sub-regions $\mathcal{R}_i := \{\xx: \ttheta^\intercal \mv{A}_i\xx \geq u\}.$ We next use $\{\mv{a}_i: i = 1,\ldots,M\}$ to find the corresponding collection of roots $\{\mv{b}_i: i = 1,\ldots,M\}$ satisfying 
\begin{equation}\label{eqn:IS_Twist}
    \nabla \Lambda(\mv{b}_i) = \mv{a}_i,
\end{equation} 
for $i=1,\ldots,M.$ With these definitions, the following two observations are in order: 
\begin{itemize}[leftmargin = *]
    \item[(i)] Due to the property of convex conjugates, we have that the convex conjugate of $\Lambda^\ast(\cdot)$ is, in turn, $\Lambda(\cdot)$ itself in this example. As a consequence, we also  symmetrically have $\nabla \Lambda^\ast(\mv{a}_i) = \mv{b}_i,$ for $i = 1,\ldots,M.$ Therefore, from the optimality conditions for $\mv{a}_i$ in \eqref{eqn:FL_Log_MGF_arg-mins}, we have for $i = 1,\ldots,M,$
    \begin{align*}
        \mv{b}_i^\intercal \big( \mv{\xx} - \mv{a}_i\big) \geq 0, \qquad \text{ for all } \xx \text{ in the sub-region } \mathcal{R}_i.
    \end{align*}
    \item[(ii)] From \eqref{eqn:IS_Twist} and Lemma \ref{lem:exp-tilt-mean}, we have $E[\ZZ] = \nabla \Lambda(\mv{b}_i) = \mv{a}_i$ when $\ZZ$ is distributed with an exponential tilting by a factor $\mv{b}_i$ as in, 
    \begin{align}
           g_i(\xx) \propto e^{\mv{b}_i^\intercal \xx}f_{\XX}(\xx).
           \label{eqn:mixture-comp}
    \end{align}
\end{itemize}
For the above reasons, the collections $\{\mv{a}_i: i = 1,\ldots,M\}$ and $\{\mv{b}_i: i = 1,\ldots,M\}$ are called as dominating points and tilt parameters, respectively. 
With the tilt factors computed as roots $\{\mv{b}_i: i =1,\ldots,M\}$ in \eqref{eqn:IS_Twist}, we select our IS distribution to be the mixture density 
\begin{align}
f_{\ZZ}(\xx) = \sum_{i=1}^M p_i g_i(\xx),    
\label{eqn:exp_twise}
\end{align}
in which the mixture component densities $g_i$ are obtained by exponentially tilting the original density $f_{\XX}(\cdot)$ by a factor $\mv{b}_i$ (as in \eqref{eqn:mixture-comp}) and the positive mixture weights satisfy $\sum_{i = 1}^m p_i = 1.$ The $i$-th component in the mixture density, $g_i(\cdot),$ has the dominating point $\mv{a}_i$ in the rare set as its mean, thereby placing a prominent amount of probability mass in the target rare set. Besides this property, the exponentially tilted densities $g_i$ are such that the respective likelihood ratios $f(\xx)/g_i(\xx)$ stay controlled throughout the rare sub-regions $\mathcal{R}_i = \{\xx:\ttheta^\intercal \mv{A}_i\xx_i \geq u\}.$  \hfill$\Box$
\end{example}

\noindent \textbf{Merits and challenges in executing IS via dominating points.} We shall see later in Section \ref{sec:red-IS-samples} that IS via dominating points, as in Example \ref{eg:Dominating_Points} above, reduces the sample requirements to a great degree, if (i) the distribution of $\XX$ is light-tailed, (ii) one can get hold of the dominating points in \eqref{eqn:FL_Log_MGF_arg-mins} and the tilting parameters  \eqref{eqn:IS_Twist} with relatively less effort, and (iii) there are not too many dominating points for any given choice of $\ttheta \in \Theta$. These conditions are  usually not met, though, for the following reasons:  
\begin{itemize}
\item[a)] One may often need to solve complicated, potentially nonconvex optimization problems described in terms of the log-moment generating function $\Lambda(\cdot),$ its convex conjugate $\Lambda^\ast$, and the excess loss set $\{\xx: \ell(\xx,\ttheta) \geq u \}$ to identify the dominating points. This is made further intractable by the requirements that (i) such complicated optimization problems (for identifying dominating points and tilt parameters) have to be solved repeatedly for different choices of $(u,\ttheta);$ and (ii) a description of log-moment generating function $\Lambda(\cdot)$ and its convex conjugate $\Lambda^\ast(\cdot)$ may not be available in most problems and have to be estimated additionally via Monte Carlo.  
\item[b)] Further, the number of dominating points may be quite large if the objective $L(\cdot)$ is complex (or) if the ambient dimension of $\XX$ is not small.   In such cases where the number of dominating points are large, identifying all of them and determining how to arrive at the weighting probabilities $p_i$ in \eqref{eqn:exp_twise} may be non-trivial. 
\end{itemize}
 While the above issues have somewhat limited the applicability of IS to a relatively narrow collection of instances, the last couple of years have witnessed efforts explicitly directed towards overcoming these limitations (see, example, \shortciteNP{he2023adaptive,deo2021achieving,arief2021certifiable} and references therein). Assuming access to the log-moment generating function $\Lambda(\cdot),$  attempts towards making IS based on dominating points applicable for more complex objectives have been undertaken  in \shortciteNP{arief2021certifiable} and \shortciteNP{probefficiency}. In particular, if the number of samples to be drawn is delicately chosen to be neither too small nor  too large, \shortciteNP{probefficiency} observes that it may be sufficient to choose the mixture distribution based on dominating points from the sub-collection $\text{argmin}\{\Lambda^\ast(\mv{a}_i): i = 1,\ldots,M\}.$ The approach presented in Section~\ref{sec:SS-IS} below, based on \shortciteNP{deo2021achieving}, is a radically different approach aiming  to overcome the difficulties (a) - (b) above by implicitly learning a good IS distribution from the samples of $\XX.$ 

\subsection{Approach 2: IS via Self-Structuring Transformations}
\label{sec:SS-IS}
In this approach, 
the search for effective IS distributions is instead recast as follows: ``Can we find a single transformation $\mv{T}(\cdot)$ whose respective push-forward distribution (i.e., the law of  $\mv{T}(\XX)$) readily serves as an effective IS distribution when deployed across a large class of problems?". This re-framed pursuit, seeking to induce an effective IS distribution \textit{implicitly} via a map $\mv{T}(\cdot),$  bypasses the need to explicitly tailor the IS distribution to every decision choice and to every problem. We shall see that this problem agnostic nature of the approach allows it to be simpler to use, making it closer in spirit to the SAA. The approach is sufficiently simple  to render itself to be readily applicable even for the two-stage programs in Example \ref{eg:two-stage-programs}. 

To explain the use of this IS approach towards solving \eqref{eqn:CVaR_OpT}, suppose that $\rho$ is a positive constant capturing the asymptotic growth rate of the loss $\ell(\xx,\ttheta)$ as a function of $\xx$: that is, $\lim_{n \rightarrow \infty} \ell(n\xx,\ttheta)/n^\rho > 0,$ for some $\ttheta \in \Theta$ and $\lim_{n \rightarrow \infty} \ell(n\xx,\ttheta)/n^\rho < +\infty,$ for all $\ttheta \in \Theta.$ For the piece-wise linear and two-stage losses in Examples \ref{eg:lin-portfolio} - \ref{eg:Dominating_Points}, we have $\rho = 1.$ For quadratic losses such as in the Delta-Gamma approximation for portfolio returns in \shortciteNP{glasserman2000variance}, we have $\rho = 2.$ Define the $\Real^d$-valued function $$\mv{T}_h(\xx) := \xx [s_{h}]^{\mv{\kappa}(\xx)},$$ where for  $h > 0,$ we take $s_{h} = h \max\{\log \log (1/\beta),1\}.$ The positive number $s_h$ can be viewed as a scalar stretch factor which allows the transformation $\mv{T}_h$ to stretch the different components of $\xx = (x_1,\ldots,x_d)$ differently via the vector-valued exponent $\mv{\kappa}(\xx) = (\kappa_1(\xx), \ldots,\kappa_d(\xx))$ defined as below: 
\begin{equation}\label{eqn:kappa_def}
{\kappa}_i(\xx) := \frac{\log (1+|x_i|)}{\rho \log(1+\Vert\xx\Vert_\infty)}, 
\quad i = 1,\ldots,d.
\end{equation}
The scalar stretch factor $s_h,$ when viewed as a function of tail level $\beta,$ is larger when the estimation problem is made rarer by letting $\beta$ smaller. 
Exponentiation is done component-wise in the above expression for $\mv{T}_h(\xx)$ as in, 
$\mv{T}_h(\xx) = (x_1 s_{h}^{\kappa_1(\xx)}, \ldots,x_d s_{h}^{\kappa_d(\xx)}).$
The map $\mv{T}_h:\Real^d \rightarrow \Real^d$ can be shown to be invertible almost everywhere on $\Real^d$ (see \citeNP[Proposition 1]{deo2021achieving}) and the transformed vector $\ZZ = \mv{T}_h(\XX)$ has a probability density if $\XX$ has a density. 
Letting $f_{\XX}$ and $f_{\ZZ}$ denote the respective densities of $\XX$  and $\ZZ,$ the likelihood ratio resulting from this change-of measure is given by, 
\begin{align}
\mathcal{L}_h = \frac{f_{\XX}(\ZZ)}{f_{\ZZ}(\ZZ)} = \frac{f_{\XX}(\ZZ)}{f_{\XX}(\XX)}  J_h(\XX)  
\label{eq:LR}
\end{align}
where the Jacobian, $J_h(\cdot),$ of the transformation equals,
\[
   J_h(\xx) =\left[\prod_{i=1}^d \tilde{J}_i(\xx) \right]\times \frac{s_h^{\mv{1}^\intercal \mv{\kappa}(\xx)}}{\max_{i=1,\ldots,d} \tilde{J}_i(\xx)}, \text{ with } \tilde{J}_i(\xx)
            := 1+\frac{\rho^{-1}\log(s_h)}{\log(1+\Vert \xx \Vert_\infty)
              }\frac{|x_i|}{1+|x_i|}, \quad i = 1,\ldots,d.
\]
From the i.i.d. samples $\ZZ_i = \mv{T}_h(\XX_i), i =1,\ldots,d,$ we have the following IS estimator for the objective function in \eqref{eqn:CVaR_OpT}:
\begin{equation}\label{eqn:FIS}
\fis(u,\ttheta) = \left[u+ \frac{1}{n\beta}\sum_{i=1}^n (\ell(\ZZ_{i},\ttheta) -u)^+\mathcal{L}_{h,i}\right],  
\end{equation}
where ${\XX}_1,\ldots,{\XX}_n$ are  i.i.d. copies of  $\XX$ and $\mathcal{L}_{h,i} = J_h(\XX_i)f_{\XX}(\ZZ_i)/f_{\XX}(\XX_i)$ denotes the likelihood  in  \eqref{eq:LR}. 


\begin{figure}[h!]
\centering
\begin{subfigure}{0.45\textwidth}
         \centering
\includegraphics[width=0.72\textwidth]{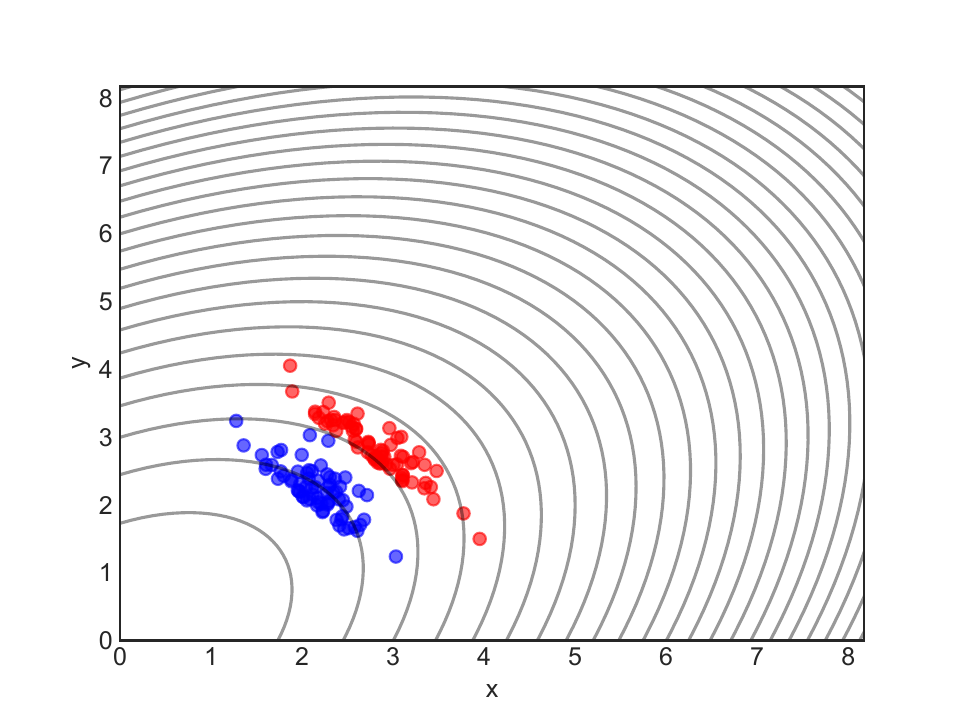} 
\end{subfigure}
\begin{subfigure}{0.45\textwidth}
         \centering
\includegraphics[width=0.72\textwidth]{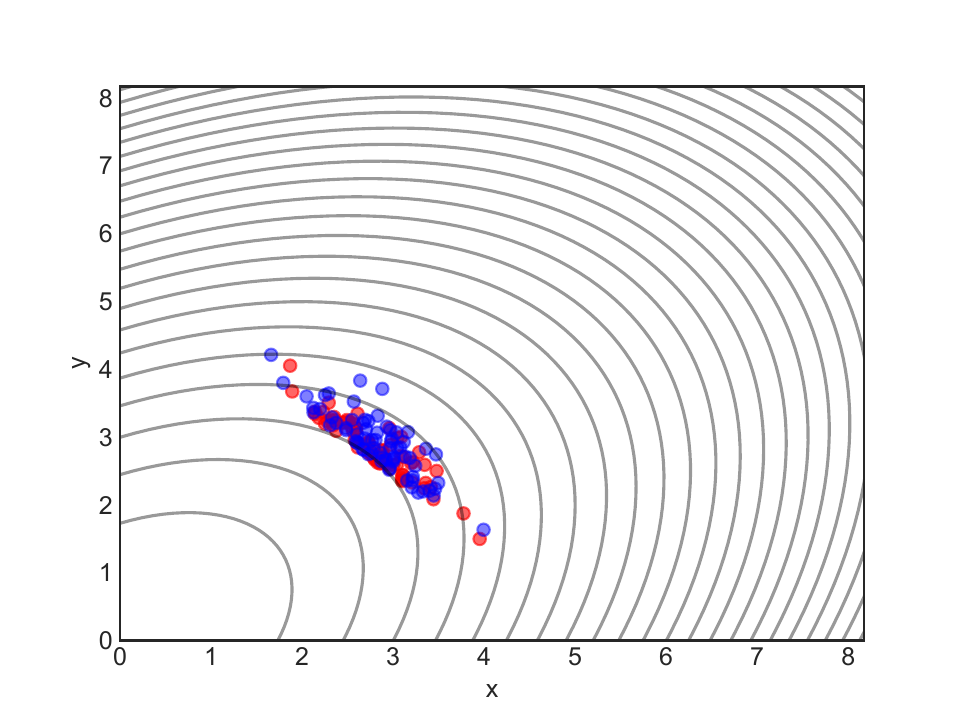} 
    \end{subfigure}
\begin{subfigure}{0.45\textwidth}
         \centering
\includegraphics[width=0.72\textwidth]{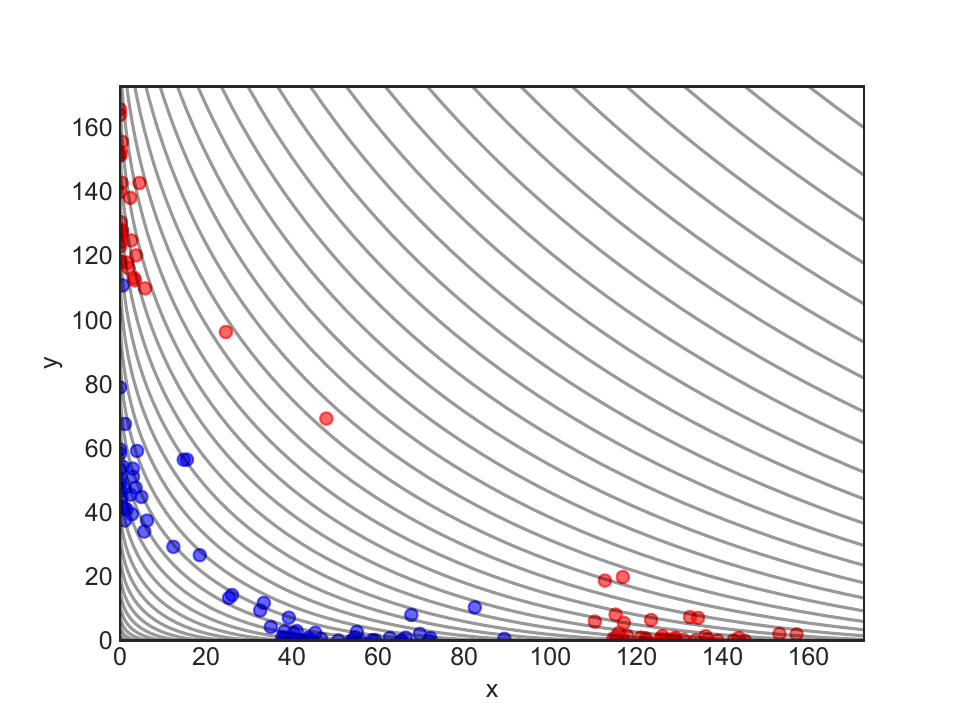} 
\end{subfigure}
\begin{subfigure}{0.45\textwidth}
         \centering
\includegraphics[width=0.72\textwidth]{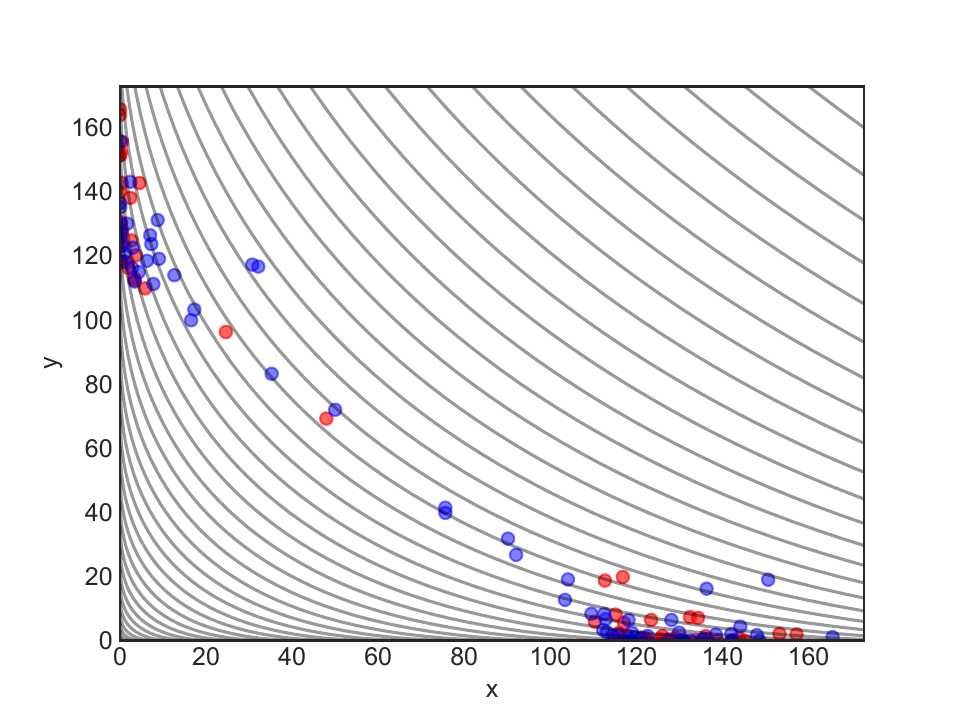} 
\end{subfigure}
\caption{An illustration of how the self-structuring transformation $\mv{T}_h(\cdot)$ helps induce IS distributions with desirable concentration properties:  The two panels in the left show the conditional excess loss samples of $\XX$ falling in the target tail set $\{\xx: \ell(\xx,\ttheta) \geq u\}$ (in red) and in the significantly less rare tail set $\{\xx: \ell(\xx,\ttheta) \geq l\},$ for a  level $l \ll u$ (in blue) and for two different choices of distributions for $\XX.$ The respective panels in the right show how the same transformation $\mv{T}_h,$ when applied to the blue samples in the top and bottom left panels, manages to replicate the concentrations of the respective target excess loss samples in red. 
}
\label{fig:SS-Transform}
\end{figure}

A main reason to use the map $\mv{T}_h(\cdot)$ is that it is capable of making use of the similarities in the distributions of the excess loss samples at different  levels of rarity in order to induce a good IS distribution. We devote the rest of this section towards developing an intuitive understanding of this property with the aid of Figure \ref{fig:SS-Transform} above. 
For values of $l$ smaller than $u,$  this approach builds on the premise that the excess loss samples falling in the target rare set $\{\xx:\ell(\xx,\ttheta)\geq u\}$ and those in the less rare set $\{\xx:\ell(\xx,\ttheta)\geq u\}$ exhibit a remarkable similarity in how they concentrate. The figures in the left panel in Figure \ref{fig:SS-Transform} illustrate this property by comparing the conditional samples of $\XX$ in the target tail event $\{\ell(\XX,\ttheta) \geq u\}$ (red points) against those in the less rare event $\{\ell(\XX,\ttheta) \geq l\}$ (blue points).  In the upper panel in the left , $\XX=(X_1,X_2)$ is drawn from a bivariate Gaussian distribution, while in the lower panel $\XX$ has i.i.d. Weibull marginals satisfying  $P(X_1 \leq x) =1-\exp(-x^{0.6})$.  For this illustration, the levels $u$ and $l$ are  chosen, respectively, to equal the $(1-10^{-4})$-th and $(1-10^{-2})$-th quantile of the loss; roughly speaking,  the blue samples falling in the the latter region get observed 100 times more frequently than the target excess loss samples in red. 
The similarity observed at different tail levels in Figure \ref{fig:SS-Transform} can be shown to hold quite generally, including various parametric and semiparametric distributions and copula families commonly used in practice (see Assumption~\ref{assume:BBIS} in Section \ref{sec:red-IS-samples} and \shortciteNP[Table 2]{deo2021achieving} for a comprehensive collection of  distribution families satisfying this tail-similarity property).

The map $\mv{T}_h(\cdot)$ is such that it preserves the concentration behaviour encoded in the less rare samples falling in sets of the form $\{\xx:\ell(\xx,\ttheta)\geq l\},$ for $l \ll v_\beta(\ttheta),$ so that the transformed samples $\mv{T}_h(\XX)$ replicate this concentration in the  target rare set  $\{\xx:\ell(\xx,\ttheta)\geq u\}$.  The respective sub-figures in the right panel in Figure \ref{fig:SS-Transform} show the distribution induced when the map $\mv{T}_h$ is applied to the less rare blue points in the left panel. In both the top and bottom panels in the right, observe that excess loss samples of $\ZZ = \mv{T}_h(\XX)$  falling in the target region (drawn in blue) concentrate in the same way as the target excess loss samples (drawn in red) concentrate.  However, the blue samples are observed 100 times more frequently than the respective red samples, thereby providing a reduction in the overall sample requirement. 
 
 A key ingredient in allowing the map $\mv{T}_h$ to possess this concentration-preserving property is the exponent $\mv{\kappa}(\cdot)$ defined in \eqref{eqn:kappa_def}. Briefly, $\mv{\kappa}$ ensures that the components of $\XX$ are only magnified/stretched to the extent necessary, as informed by the samples in the less rare region $\{\xx: \ell(\xx,\theta) \geq l\}.$  For instance, if a particular component of $\XX_i$ falling in this region is small, then the component remains small even after applying the map $\mv{T}_h$ (and vice versa). Please refer to \shortciteNP[Section 5]{deo2021achieving} for a definition and properties of rate-function preserving transformations, a notion which lends precision to the idea discussed only at an intuitive level here with the help of Figure \ref{fig:SS-Transform}. 



\subsection{Reduction in sample requirements in estimating the CVaR objective using IS } 
\label{sec:red-IS-samples}
As in Sections \ref{sec:IS_Background} - \ref{sec:SS-IS}, we fix a decision choice $\ttheta \in \Theta$ and discuss the magnitude of sample reductions one may obtain with the presented IS approaches in the estimation of CVaR $C_\beta(\ttheta).$
Let $n_{{\textsc{saa}}}(\beta)$ and $n_{{\tt is}}( \beta)$ denote the number of samples required by SAA and IS, respectively, in order to guarantee that the  objective $C_\beta(\ttheta)$ is estimated with a desired relative precision: In particular, the obtained estimate should be such that the relative error does not exceed a pre-specified level $\varepsilon \in (0,+\infty),$ with $(1-\alpha)\times 100\%$ confidence.  
\begin{theorem}\label{thm:SCR-Est}
For any $\delta > 0,$ we have the sample-requirement reduction guarantee that 
\begin{align*}
    \frac{n_{{\textsc{saa}}}(\beta)}{n_{\tt is}(\beta)} \geq \frac{c}{\beta^{1-\delta}}, \qquad \text{ for all } \beta < \beta_0,
\end{align*}
for  suitable constants $\beta_0 \in (0,1)$ and $c > 0,$ if either  
\begin{itemize}
    \item[(i)] IS density \eqref{eqn:exp_twise} based on exponential tilting is used and  Assumption  (1) below is satisfied by the loss $\ell(\cdot)$ and the distribution of $\XX;$ (or) 
    \item[(i)] IS estimator \eqref{eqn:FIS} based on self-structuring transformations is used and  Assumption  (2) below is satisfied by the loss $\ell(\cdot)$ and the distribution of $\XX.$
\end{itemize}
\end{theorem}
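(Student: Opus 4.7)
The plan is to reduce the comparison of sample requirements to a comparison of asymptotic variances and then separately bound these variances. The natural starting point is to invoke the standard central limit theorem characterization of the minimum number of samples needed by an unbiased Monte Carlo estimator to achieve a relative precision $\varepsilon$ with confidence $1-\alpha$. For such an estimator with per-sample variance $\sigma^2$, this sample requirement is asymptotically $z_{\alpha/2}^2\,\sigma^2/(\varepsilon\,C_\beta(\ttheta))^2$. Applying this separately to the SAA estimator $\hat f_n(v_\beta(\ttheta),\ttheta)$ and to the IS estimator $\fis(v_\beta(\ttheta),\ttheta)$, the ratio $n_{\textsc{saa}}(\beta)/n_{\tt is}(\beta)$ becomes asymptotically the ratio of the corresponding per-sample variances. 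Thus it suffices to show (a) a lower bound of order $\beta^{-1}$ on $\sigma_{\textsc{saa}}^2$, and (b) an upper bound on $\sigma_{\tt is}^2$ that grows no faster than $\beta^{-\delta}$ for arbitrarily small $\delta > 0$.

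For the SAA lower bound, I would exploit the identity $E[(\ell(\XX,\ttheta)-v_\beta(\ttheta))^+] = \beta\,(C_\beta(\ttheta) - v_\beta(\ttheta))$ together with the Cauchy--Schwarz lower bound $E[((\ell(\XX,\ttheta)-v_\beta(\ttheta))^+)^2] \geq \beta\,(C_\beta(\ttheta)-v_\beta(\ttheta))^2$, obtained by conditioning on the event $\{\ell(\XX,\ttheta) \geq v_\beta(\ttheta)\}$ which has probability $\beta$. After scaling by the $\beta^{-1}$ factor appearing in the definition of $F(\XX;v_\beta(\ttheta),\ttheta)$, the variance of the SAA summand is at least of order $\beta^{-1}(C_\beta(\ttheta)-v_\beta(\ttheta))^2$, giving $\sigma_{\textsc{saa}}^2 = \Omega(\beta^{-1})$.

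For the IS upper bound, the main tool in both cases is the logarithmic (or asymptotic) efficiency of the corresponding IS scheme, which states that the second moment of the scaled weighted summand $(\ell(\ZZ,\ttheta)-v_\beta(\ttheta))^+\,\mathcal{L}$ satisfies $\log E[(\cdot)^2]/\log \beta \to 2$ as $\beta \to 0$, so that it can be bounded above by $C\,\beta^{2-\delta}$ for any fixed $\delta > 0$ once $\beta$ is sufficiently small. For case (i) the exponential twist in \eqref{eqn:exp_twise}, built around the dominating points $\mv{a}_i$ with tilt parameters $\mv{b}_i$, yields a likelihood ratio controlled uniformly over each sub-region $\mathcal{R}_i$ by $\exp(-\mv{b}_i^\intercal\xx + \Lambda(\mv{b}_i))$; combined with observation (i) in Example \ref{eg:Dominating_Points}, this gives a G\"artner--Ellis style upper bound matching the true decay rate of $E[(\ell(\XX,\ttheta)-v_\beta(\ttheta))^+]^2$ up to $\beta^{-\delta}$ factors, under Assumption (1). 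For case (ii) the rate-function preserving property of $\mv{T}_h(\cdot)$ established in \shortciteNP{deo2021achieving} under Assumption (2) yields the analogous second-moment bound for the self-structuring estimator \eqref{eqn:FIS}. After dividing by $\beta^2$, this translates to $\sigma_{\tt is}^2 = O(\beta^{-\delta})$.

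Combining the two bounds gives $n_{\textsc{saa}}(\beta)/n_{\tt is}(\beta) \geq c\,\beta^{-(1-\delta)}$ for $\beta$ below a suitable threshold $\beta_0$, which is exactly the claim. The main obstacle will be the IS second-moment estimate under each set of assumptions: verifying that a single parametrization of $\mv{T}_h$ (or the mixture of twists) suffices to deliver logarithmic efficiency uniformly as $\beta \to 0$ requires both a large deviations characterization of the tail event $\{\ell(\XX,\ttheta) \geq v_\beta(\ttheta)\}$ and a careful control of the likelihood ratio on and off the dominating-point neighborhoods. Everything else (the CLT reduction and the SAA lower bound) is essentially routine.
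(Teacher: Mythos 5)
Your proposal is correct and follows essentially the same route as the paper, which itself gives no detailed argument but derives the result from the logarithmic-efficiency (second-moment) variance-reduction guarantees of \shortciteNP[Theorem 1]{arief2021certifiable} and \shortciteNP[Theorem 1]{deo2021efficient}, combined with the CLT-based characterization of sample requirements and the $\Omega(\beta^{-1})$ relative variance of the SAA summand. Your outline is a faithful unpacking of exactly that chain of reasoning, with the hard part correctly identified as the cited second-moment bounds.
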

Theorem \ref{thm:SCR-Est} implies that the sample requirement due to IS grows only like $o((1/\beta^{\delta})),$ for any $\delta > 0,$ as the tail-level $\beta \rightarrow 0.$ This is in contrast to the steep $\tilde{O}(1/\beta)$ sample requirement that is inevitable with the use of SAA. Theorem~\ref{thm:SCR-Est} is a consequence of  the variance reduction guarantees presented in (i) \shortciteNP[Theorem 1]{arief2021certifiable} for exponential tilting and (ii) \shortciteNP[Theorem 1]{deo2021efficient}  for self-structuring IS transformations. 
Assumptions~\ref{assume:Exp_Twist_Requirement} and \ref{assume:BBIS} below, respectively, give the  conditions under the above sample reduction requirements can be guaranteed for IS using  the two approaches.

\begin{assumption}{(\textbf{Assumptions for Exponential Twisting).} }\label{assume:Exp_Twist_Requirement}
    \begin{enumerate}
        \item[(i)] The effective domain of the log-moment generating function of $\XX,$ $D(\Lambda)= \{\rr: \Lambda(\rr)<\infty\},$ has a non-empty interior. Moreover, $\Lambda(\cdot)$ is strictly convex and continuously differentiable in the interior of $D(\Lambda)$. 
        \item[(ii)] The set $\{\xx:\ell(\xx,\ttheta) \geq u\}$ is orthogonally monotone for any $\ttheta \in \Theta$: that is, if $\xx\leq\xx_1$ and $\ell(\xx,\ttheta) \geq u$ for $\ttheta \in \Theta,$ then it is necessary that $\ell(\xx_1,\ttheta)\geq u.$
    \end{enumerate}
\end{assumption}
Note that Part (i) of Assumption \ref{assume:Exp_Twist_Requirement} requires that $\XX$ is light-tailed. 
For stating the assumptions for self-structuring transformations, we introduce the following non-restrictive regularity notion: We call  a function $f(\cdot)$ to be \textit{multivariate regularly varying} if $f(n\xx)/f(n\mv{1})$ is uniformly convergent (as $n\to\infty$) in compact subsets of $\R^d.$
\begin{assumption}{(\textbf{Assumptions for Self-Structuring Transformations}).}\label{assume:BBIS}
\begin{enumerate}
    \item[(i)] Either the density $f_{\XX}(\cdot)$  or $\log f_{\XX}(\cdot)$ is multivariate regularly varying. 
    \item[ii)]  The loss $\ell(\cdot)$ satisfies the following for some $\rho > 0:$  $\ell(n\xx,\ttheta)/n^\rho$ is convergent uniformly in compact sets to a non-zero function. 
\end{enumerate}
\end{assumption}
Assumption \ref{assume:BBIS} allows both light and heavy-tailed distributions for $\XX.$ A wide variety of parametric and semiparametric multivariate distributions, including normal, exponential family, elliptical, log-concave distributions and Archimedian copula models  satisfy Assumption~\ref{assume:BBIS}(i) (see, \shortciteNP[Table 2]{deo2021achieving} for a comprehensive yet nonexhaustive collection of distributions satisfying Assumption \ref{assume:BBIS} or its more general variants). In its presented form, Assumption~\ref{assume:BBIS}(i) only allows for distributions whose marginals have a similar tail strength,
To see how the self-structuring transformation continues to work well even if this condition is relaxed, we refer the reader to \citeNP{deo2021achieving}. 
Part (ii) of Assumption~\ref{assume:BBIS} imposes a mild, non-parametric restriction on the how the loss function $\ell(\cdot)$ grows. It is satisfied easily in a number of losses which are of interest in stochastic optimization, including piecewise-linear losses, quadratic losses, and losses which occur in two-stage programs, such as in Example \ref{eg:two-stage-programs}. While the asymptotic sample requirement reduction in Theorem \ref{thm:SCR-Est} holds for any fixed choice of $h > 0$ in the transformation $\mv{T}_h,$ an ideal choice of the hyperparameter $h$ is made numerically as explained in Section \ref{sec:Retro_SAA} below. 

\section{RETROSPECTIVE APPROXIMATION as a solution paradigm}\label{sec:Retro_SAA} 
We examined in the previous section examples for how one may arrive at effective IS distributions for  any fixed decision choice $\ttheta \in \Theta.$  How to integrate the several IS distribution prescriptions we have, one for each decision choice, into a method for solving the optimization formulation \eqref{eqn:CVaR_OpT}? The solution paradigm of retrospective approximation (RA) serves as a natural vehicle to utilize the change of measure prescriptions effectively towards solving the CVaR minimization in  \eqref{eqn:CVaR_OpT}.

Retrospective approximation (\shortciteNP{CHEN_2001,pasupathy2010choosing}) has been developed as a computationally attractive alternative to SAA in solving general stochastic optimization and root finding problems. For a fixed error tolerance, recall that SAA involves solving  one large  problem formulated with all the samples allowed by the computational budget. The premise behind RA is to reduce the overall computational effort relative to SAA by instead solving a \textit{sequence} of SAA sub-problems: each sub-problem in the sequence is initialized with the solution of the previous sub-problem and is solved with a  larger number of samples, and upto a smaller error tolerance, than its predecessors.  This eases the overall computational burden as follows: The initial sub-problems are computationally light due to smaller sample sizes and larger error tolerances; the later sub-problems are computationally efficient as they are initialized with the solution from the previous stage, and refining them locally with the availability of more samples is less demanding than conducting an overall search. 

\subsection{Retrospective Approximation for Minimizing Tail Risks}
Interestingly, this general RA paradigm becomes an ideal vehicle for executing importance sampling, as every new sub-problem offers an opportunity to obtain samples from an IS distribution which is most suited for the regions in which the search for a solution is being presently conducted. 

With this guiding philosophy, the RA procedure outlined in Algorithm~\ref{algo:CVaR-I.S-RA-Oracle} below provides a  template to suitably incorporate IS distribution prescriptions we have from Section \ref{sec:IS} within a solution paradigm. In the general template presented in Algorithm~\ref{algo:CVaR-I.S-RA-Oracle}, this scheme assumes that we have chosen a suitable family  $\{g_{\alpha} : \alpha\in \mathcal{A}\}$ of IS probability densities, and a method for arriving at a good IS density choice $g_\alpha$ for any given selection of $(u,\ttheta)$ in solving the right hand side of \eqref{eqn:CVaR_OpT}. To capture this method's use succinctly in the algorithm, we represent it as an oracle mapping $\mv b: \R\times \Theta \rightarrow \mathcal{A}$ satisfying the following assumption.
\begin{assumption}
\label{assume:Oracle_Access}
The IS oracle  $\mv b: \R\times \Theta \rightarrow \mathcal{A}$ is such that for every decision $(u,\ttheta)$, the resulting IS probability density $g_\alpha,$  with  $\alpha$ set to $\alpha = \mv{b}(u,\ttheta)$, is the IS distribution of our choice for the estimation of $E(\ell(\XX,\ttheta) - u)^+$. 
\end{assumption}

\begin{algorithm}[h]
 \caption{Retrospective Approximation based CVaR Optimization}\label{algo:CVaR-I.S-RA-Oracle}
  \textbf{Input:}  Initial iterate $(u_0,\ttheta_0),$ an increasing sequence  $(n_k: k \geq 0)$ of sample-sizes with $n_0=0$, a decreasing sequence of error tolerances $\{\varepsilon_k : k \geq 1\},$ an initial IS parameter $\alpha_0$, IS oracle $\mv{b}(\cdot).$
  
 \noindent \textbf{ For $k\geq 1$, do}\\ 
  \textbf{1. Obtain importance samples:} Draw i.i.d. samples $\ZZ_{n_{k-1}+1},\ldots,\ZZ_{n_k}$ from the distribution $P_{\alpha_{k-1}}$.\\
\noindent \textbf{2. Solve the IS based optimization:} With the likelihood ratios set to  $\mathcal{L}_i = f_{\XX}(\ZZ_i)/g_{\alpha_{k-1}}(\ZZ_i)$ for $i=n_{k-1}+1,\ldots,n_k,$ solve the following problem upto a tolerance of $\varepsilon_k$:  \begin{equation}\label{eqn:CVaR-comp-Is-Oracle}
       \hat{c}_{is,n_k} := \inf_{u,\ttheta}\left[u + \frac{1}{n_k\beta}
      \sum_{i=1}^{n_k} \big(\ell({\mv{Z}}_{i},\ttheta) - u \big)^+
      \mathcal{L}_{i},\right] = \inf_{u,\ttheta} \hat{f}_{is,n_k}(u,\ttheta)
    \end{equation}
      with an initial solution $(u_{k-1},\ttheta_{k-1})$. Return $(u_k,\ttheta_k)$ as the solution obtained by solving \eqref{eqn:CVaR-comp-Is-Oracle}.\\
  \noindent \textbf{3. Update IS distribution choice:} Set $\alpha_{k} = \mv{b}(u_{k},\ttheta_{k})$. Set $k=k+1$.
   
  \noindent \textbf{Final Output: } Return the solution  $(u_k,\ttheta_k)$ 
  \end{algorithm}  

Describing a solution paradigm conveniently via an IS oracle, as in the case of Algorithm \ref{algo:CVaR-I.S-RA-Oracle} above, follows from the work of \shortciteNP{he2023adaptive}. Considering the case of self-structuring importance samplers explained in Section \ref{sec:SS-IS}, \shortciteNP{deo2022combining} provides the above RA procedure for integrating IS with optimization. The above RA procedure can alternatively be  interpreted as performing lazy-updates for IS distributions in the adaptive SAA scheme introduced by \shortciteNP{he2023adaptive}. 
\begin{example}[An IS oracle $\mv{b}$ for use with exponential tilting]\label{eg:Oracle_Twist}
\textnormal{
In the case of IS based on dominating points and exponential tilting considered in Example \ref{eg:Dominating_Points}, the following serve as the oracle mapping $\mv{b}(u,\ttheta):$ For any given $(u,\ttheta),$ consider the tilt parameters $(\mv{b}_i: i = 1,\ldots,M)$ obtained by solving \eqref{eqn:FL_Log_MGF_arg-mins} and \eqref{eqn:exp_twise}, together with the mixture weights $(p_i: i = 1,\ldots,M).$ With these  collections defining the parameters used in IS density in \eqref{eqn:exp_twise}, we use $(\mv{b}_i: i = 1,\ldots,M)$ and $(p_i: i = 1,\ldots,M)$ as the oracle mapping $\mv{b}(u,\ttheta)$
 }
\end{example}
\begin{example}[An IS oracle $b(\cdot)$ for use with self-structuring IS in Section \ref{sec:SS-IS}]
\label{eg:oracle-SS-IS}
\em
Recall that for the case of self-structuring IS transformations introduced in Section \ref{sec:SS-IS}, a wide range of stretching hyperparameters $h$ have been shown to offer good variance reduction asymptotically. If one wishes to update to a specific choice of $h$ in Step 3 of Algorithm \ref{algo:CVaR-I.S-RA-Oracle} above, they may do so via a simple cross-validation type one-dimensional search demonstrated in Algorithm \ref{algo:CVaR-FULL.} below. 
\begin{algorithm}[h]
 \caption{An IS oracle $\mv{b}$ for use with Self-Structuring IS transformations}
 \label{algo:CVaR-FULL.}
  \textbf{Input:}  iterate $(u,\ttheta)\,$, i.i.d. samples $\XX_{1},\ldots,\XX_m$ from the distribution of $\XX$,  initial seed $h_0$.\\

 \noindent \textbf{1. The oracle objective}, for any choice of stretch hyperparameter $h,$ is evaluated to be the second moment defined below: 
\begin{equation}\label{eqn:prob_h}
     \hat{M}_{2}(h\,;\,u,\ttheta) =  \frac{1}{m} \sum_{i=1}^m \left[\left( \ell(\XX_{i};\ttheta ) - u \right)^+ \right]^2 \mathcal{L}_{h,i}, 
 \end{equation}
 where $\mathcal{L}_{h,i}$ denotes the likelihood ratio \eqref{eq:LR}.
   
\noindent \textbf{2 Update the cross validation parameter: }  Return a  stretch parameter minimizing the oracle objective above as the output of the IS oracle:
\begin{equation}\label{eqn:h-grad}
       \mv{b}(u,\ttheta) \  \in \  \arg\min_{h}  \, \hat{M}_{2}(h\,;\,u,\ttheta) 
    \end{equation}
  \end{algorithm}  
\end{example}

\subsection{Guidance on selecting sample size and tolerance parameter for RA}
Assumption~\ref{assume:RA-size} below imposes a mild condition  on the sample-size $n_k$ and error tolerance $\varepsilon_k$ one may need to use for the $k$-th sub-problem in Algorithm \ref{algo:CVaR-I.S-RA-Oracle}. These conditions follow from the guidance derived in \citeNP[Assumptions C.1-C.3]{pasupathy2010choosing}.

\begin{assumption}
\label{assume:RA-size}\em
Suppose that the sequence $\{(\varepsilon_k,n_k): k \geq 1\}$ satisfies the following requirements: 
\begin{enumerate}
    \item If the optimization procedure used to solve \eqref{eqn:CVaR-comp-Is-Oracle} exhibits linear convergence, then  $(n_k,\varepsilon_k : k \geq 1)$ is such that  $\liminf_{k\to\infty}\varepsilon_{k-1} \sqrt{n_k} >0$. If this procedure exhibits polynomial convergence, then  $(n_k,\varepsilon_k : k \geq 1)$ is such that  $\liminf_{k\to\infty} {\log 1/\sqrt{n_{k-1}}}({\log \varepsilon_k})^{-1} >0$.
    \item $\limsup_{k\to\infty} (\sum_{j=1}^k n_j)^2/\varepsilon_{k}^2<\infty $ and $\limsup_{k\to\infty} n_k^{-1}\sum_{j=1}^k n_j <\infty$.
\end{enumerate}
\end{assumption}
Assumption~\ref{assume:RA-size} imposes conditions so that the errors due to finite sample size and the errors due to solver error tolerance are balanced out, so that the cumulative work performed is kept minimal. 
For instance,  condition 1 requires that the optimization error tolerance $\varepsilon_k$ decays does not decay to $0$ too fast. If this is not satisfied, then the error in solution due to imperfect optimization will be orders of magnitude smaller than the sampling error, and therefore lead to a wastage of computational effort.
Likewise, achieving  low variance with a large sample size while allowing a larger error tolerance in the solver is also computationally inefficient. 
Conditions 2 imposes a lower bound on rates at which $(n_k,\varepsilon_k)$ converge to their limits, so that the solutions output by successive epochs do not get ``stuck". 
For instance, if $\varepsilon_k$ converges to $0$ too slowly, then the tolerance condition may be too easily satisfied, and therefore lead to no improvement in solution. 
Conversely if $n_k$ goes to $\infty$ too slowly, the difference in $n_{k-1}$ and $n_k$ is so small that the iterate does not move.  In either case, the work done in the $k$th iteration of the RA procedure is wasted and leads to a  computational suboptimality. Thus the specifications of sample size and error tolerance are such that the two errors are balanced just about ideally in every sub-problem \eqref{eqn:CVaR-comp-Is-Oracle}.

A natural choice sample size is $n_{k} = \lceil cm_{k-1}\rceil$ for linearly converging  optimization procedures, and  $n_k = \lceil n_{k-1}^c\rceil$ for polynominally converging procedures. Meanwhile, for both these cases, $\varepsilon_k  = K/\sqrt{n_k}$ is a good choice for the error tolerance. We refer interested readers to \shortciteNP[Section 5.3]{pasupathy2010choosing} for a detailed investigation.

\subsection{Reduction in sample requirements due to IS}
Recall the IS oracle $\mv{b}(\cdot)$ mapping selections  illustrated in Examples \ref{eg:Oracle_Twist} - \ref{eg:oracle-SS-IS}. For these IS oracles, we next show that the reduction in sample requirement exhibited in Theorem \ref{thm:SCR-Est} for a fixed decision  carries forward to solving the risk minimisation problem \eqref{eqn:CVaR_OpT}. As before, let $n_{\textsc{saa}}(\beta)$ and $n_{{\tt is}}(\beta)$ denote the number of samples required to optimise CVaR such that the resulting optimal values lie within $\varepsilon$-relative precision of the true optimal value $c_\beta$ in \eqref{eqn:CVaR_OpT} with $(1-\alpha) \times 100\%$ confidence. 
\begin{theorem}\label{prop:IS-RA-sample-error}
Suppose that Assumptions \ref{assume:Oracle_Access} - \ref{assume:RA-size} are satisfied. Then for any $\delta > 0,$ we have 
\begin{align*}
    \frac{n_{{\textsc{saa}}}(\beta)}{n_{\tt is}(\beta)} \geq \frac{c}{\beta^{1-\delta}}, \qquad \text{ for all } \beta < \beta_0,
\end{align*}
for  suitable constants $\beta_0 \in (0,1)$ and $c > 0,$ if either  (i) Assumption~\ref{assume:Exp_Twist_Requirement} holds and the oracle mapping for Algorithm~\ref{algo:CVaR-I.S-RA-Oracle} be as in Example~\ref{eg:Oracle_Twist}; or (ii) Assumption~\ref{assume:BBIS} holds and the parameter $h$ for the self-structuring IS transformation $\mv{T}_h(\cdot)$ be selected as in Algorithm~\ref{algo:CVaR-FULL.}. 
\end{theorem}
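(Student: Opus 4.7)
The overall strategy is to transfer the pointwise variance reduction established in Theorem~\ref{thm:SCR-Est} into an optimization sample-complexity reduction by invoking the convergence theory for retrospective approximation developed in \shortciteN{pasupathy2010choosing}. The plan is to first apply the standard RA convergence statement, under Assumption~\ref{assume:RA-size}, to write the accuracy of the final iterate $(u_k,\ttheta_k)$ in terms of the asymptotic variance of the IS-based objective estimator $\fis$ at the optimizer $(u^\ast,\ttheta^\ast)$ of \eqref{eqn:CVaR_OpT}; then use Assumption~\ref{assume:Oracle_Access} together with the explicit oracle constructions in Examples~\ref{eg:Oracle_Twist}--\ref{eg:oracle-SS-IS} to argue that the IS distribution visited in the final epochs matches the one certified by Theorem~\ref{thm:SCR-Est} at $(u^\ast,\ttheta^\ast)$; and finally combine the two to read off the claimed factor of $\beta^{1-\delta}$.

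Concretely, under Assumption~\ref{assume:RA-size} the cumulative work across epochs is dominated by the terminal epoch and the iterate satisfies a CLT of the form $\sqrt{n_k}\big((u_k,\ttheta_k)-(u^\ast,\ttheta^\ast)\big) \convd \mathcal{N}(0, H^{-1}\Sigma_{\tt is}(u^\ast,\ttheta^\ast)H^{-1})$, with $H$ the Hessian of $f$ at the optimizer and $\Sigma_{\tt is}$ the asymptotic covariance of the IS gradient estimator. Consequently, $n_{\tt is}(\beta)$ scales linearly with a trace functional of $\Sigma_{\tt is}(u^\ast,\ttheta^\ast)$ to achieve $\varepsilon$-relative precision at confidence $1-\alpha$, and the analogous statement holds for SAA with $\Sigma_{\textsc{saa}}(u^\ast,\ttheta^\ast)$. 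Next, I would verify that the oracles in Examples~\ref{eg:Oracle_Twist}--\ref{eg:oracle-SS-IS} eventually deploy the IS density certified by Theorem~\ref{thm:SCR-Est} at $(u^\ast,\ttheta^\ast)$: for case (i), continuity of the argmin in \eqref{eqn:FL_Log_MGF_arg-mins} in $(u,\ttheta)$ (Berge's theorem applied under Assumption~\ref{assume:Exp_Twist_Requirement}, where $\Lambda$ is strictly convex and differentiable) ensures that the dominating points and tilt parameters $\mv{b}(u_k,\ttheta_k)$ converge to their values at the optimum; for case (ii), Assumption~\ref{assume:BBIS} in fact guarantees that \emph{every} positive stretch parameter $h$ delivers the asymptotic variance-reduction factor of Theorem~\ref{thm:SCR-Est}, so the cross-validated minimizer in \eqref{eqn:h-grad} does as well, as long as it stays bounded away from $0$ and $\infty$. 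Theorem~\ref{thm:SCR-Est} then yields $\Sigma_{\tt is}(u^\ast,\ttheta^\ast) \leq c\,\beta^{1-\delta}\,\Sigma_{\textsc{saa}}(u^\ast,\ttheta^\ast)$ for suitable $c$ and all small enough $\beta$, and taking the ratio gives the desired bound.

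The most delicate step is the passage from a pointwise variance bound to one that controls the asymptotic covariance of the RA iterate. Two ingredients are needed: uniform integrability of the squared likelihood ratio $\mathcal{L}_h^2 \cdot [(\ell(\ZZ,\ttheta)-u)^+]^2$ over a shrinking neighborhood of $(u^\ast,\ttheta^\ast)$, so that the CLT for the optimizer truly inherits the pointwise covariance of Theorem~\ref{thm:SCR-Est}; and continuity of the oracle map $\mv{b}(\cdot)$ at $(u^\ast,\ttheta^\ast)$, so that the IS distribution in the terminal epochs stabilizes. I would handle these in the spirit of \shortciteN{deo2022combining} and \shortciteN{he2023adaptive}, using Assumption~\ref{assume:RA-size} to guarantee that the optimization error in each epoch is of the same order as the sampling noise (so neither dominates the RA asymptotic variance), and using the structural descriptions of the oracles in Examples~\ref{eg:Oracle_Twist}--\ref{eg:oracle-SS-IS} for the requisite continuity. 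The remaining bookkeeping—summing the sample sizes across epochs, and translating variance into the confidence level $1-\alpha$ via a Chebyshev or Berry--Esseen type bound uniformly in $\beta$—is standard and can be read off from the cited RA analyses.
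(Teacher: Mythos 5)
Your plan follows essentially the same route the paper takes (implicitly, via its reliance on \shortciteN{pasupathy2010choosing}, \shortciteN{deo2022combining} and \shortciteN{he2023adaptive}): transfer the pointwise variance reduction of Theorem~\ref{thm:SCR-Est} at the optimizer into a sample-complexity ratio using the RA convergence theory under Assumption~\ref{assume:RA-size}, with the oracle continuity/stabilization and uniform-integrability caveats you correctly flag being exactly the details handled in those cited analyses. One small refinement: the accuracy criterion here is relative precision of the \emph{optimal value} $c_\beta$ rather than of the iterate, so the relevant limit theorem is for $\hat{c}_{is,n_k}$ with asymptotic variance equal to that of the IS objective estimator at $(u^\ast,\ttheta^\ast)$ (no Hessian sandwich needed), which if anything simplifies your argument; and for case (ii) the boundedness of the cross-validated $h$ is most easily ensured by restricting the search in \eqref{eqn:h-grad} to a compact interval, which loses nothing since every fixed $h>0$ already attains the guarantee.
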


Similar to Theorem \ref{thm:SCR-Est}, Theorem~\ref{prop:IS-RA-sample-error} implies that the sample requirement to solve the more challenging CVaR optimization problem using IS also grows only like $o(1/\beta^{\delta}),$ for any $\delta > 0,$ as the tail-level $\beta \rightarrow 0.$ In particular, there is no efficiency lost in embedding the IS change of distributions using the RA solution paradigm. To get a numerical sense of the savings in sample requirement, consider the portfolio optimization task in Example \ref{eg:lin-portfolio}. For  $\beta=1/100$, \shortciteNP{deo2022combining} demonstrates, for instance, that while SAA takes about $8000$ samples to achieve a $1\%$ relative error, self-structuring IS require only about $550$ samples (about 15 times less).  \shortciteNP{deo2022combining} carries an additional analysis on the savings in the  work complexity (or the total computational effort) due to the RA 
 procedure in Algorithm \ref{algo:CVaR-FULL.} relative to SAA.

\section{STOCHASTIC APPROXIMATION as a solution paradigm}\label{sec:Adaptive_IS}
As an alternative to RA, one may also consider iterative stochastic approximation methods as a solution paradigm for solving \eqref{eqn:CVaR_OpT}.
Under mild technical conditions, \eqref{eqn:CVaR_OpT} reduces to the stochastic root finding problem
\begin{equation}\label{eqn:SPO_FOCs}
    \nabla f(u,\ttheta) = \mv{0} \text{ or equivalently } E[\mv{G}(\XX;u,\ttheta)] = \mv{0} \text{ where } \mv{G}(\xx;u,\ttheta) = \frac{\partial F (\xx;u,\ttheta).}{\partial (u,\ttheta)} 
\end{equation}
A typical approach to solving \eqref{eqn:SPO_FOCs}, without any change of distribution, is to use the following iterative scheme (see, for e.g.., \shortciteNP{AGbook}): Given a sample $\XX_n,$  update the iterate recursively via 
\begin{equation}\label{eqn:Grad_Step}
    (u_n,\ttheta_n) \leftarrow (u_{n-1},\ttheta_{n-1}) - \gamma_n \mv{G}(\XX_n;u_{n-1},\ttheta_{n-1}) \text{ for } n=1,2,\ldots 
\end{equation}
The gradients  $\mv{G}(\xx;u,\ttheta) = (\frac{\partial F}{\partial u}, \frac{\partial F}{\partial \ttheta})(\xx\,;\, u,\ttheta)$ can be readily computed for the objective \eqref{eqn:CVaR_OpT} as in,
\begin{equation}\label{eqn:Grad_u}
\frac{\partial }{\partial u} F(\xx\,;\, u,\ttheta)  =  1- \beta^{-1}\setindicator(\ttheta^\intercal\xx \geq u)\quad
  \frac{\partial }{\partial \ttheta} F(\xx\,;\, u,\ttheta) =  \beta^{-1} \frac{\partial \ell}{\partial \ttheta} (\xx; \ttheta) \,\setindicator(\ttheta^\intercal\xx \geq u)
   \end{equation}
Following the approach devised in \shortciteNP{he2023adaptive}, one may readily embed the IS change of distributions into the update step as shown in  Algorithm~\ref{algo:Oracle_Based_IS_CVaR} below.  One may also obtain the sample-requirement reduction guarantees similar to that obtained for the retrospective approximation in Theorem \ref{prop:IS-RA-sample-error}, thanks to the convergence analysis executed in \shortciteNP{he2023adaptive}. 
\begin{algorithm}[h] 
  \caption{CVaR Optimization using oracle-based adaptive IS}\label{algo:Oracle_Based_IS_CVaR}
  \ \vspace{-2pt}\\
  \KwIn{Density $f_{\XX}(\cdot),$  initial iterate $(u_0,\ttheta_0)$, initial IS parameter choice  $\alpha_0$, IS oracle  $\mv{b}$, step-size parameters $c > 0,\ \gamma \in (1/2,1)$ }
     Initialise $n=1$. \textbf{While} stopping criterion not met \textbf{do} 
     \begin{enumerate}
         \item Generate an independent sample $\ZZ_n$ from the IS density $g_{\alpha_n}(\cdot)$.
         \item Set $\gamma_n = cn^{-\gamma}$ and update root estimate 
         \begin{equation}\label{eqn:Robbins_IS}
             (u_n,\ttheta_n) \leftarrow (u_{n-1},\ttheta_{n-1}) - \gamma_n \mv{G} (\ZZ_{n}; u_{n-1},\ttheta_{n-1}) \frac{f_{\XX}(\ZZ_n)}{g_{\alpha_n}(\ZZ_n)}
         \end{equation}

        
         \item Update IS parameter $\alpha_{n+1} = \mv{b}(u_{n},\ttheta_{n})$. Set $n=n+1$.
     \end{enumerate}
     \textbf{Output:} Return the averaged iterate   $(\bar{u}_n,\bar{\ttheta}_n) = n^{-1}\sum_{i=1}^n(u_{n},\ttheta_{n})$ as an estimate of the optimal solution to \eqref{eqn:CVaR_OpT}.
    \end{algorithm}

\section*{Acknowledgements}
The authors acknowledge support from Singapore Ministry of Education grant MOE2019-T2-2-163.

{ \small \bibliographystyle{wsc}
\bibliography{demobib}}

\end{document}